\DeclareMathAlphabet{\mathpzc}{OT1}{pzc}{m}{it}
\begin{document}

\title{Deterministic and Stochastic Fixed-time Stability of Discrete-time Autonomous Systems}

\author{Farzaneh Tatari, and Hamidreza Modares,~\IEEEmembership{Senior~Member,~IEEE}% <-this % stops a space
%\thanks{M. Shell was with the Department
%of Electrical and Computer Engineering, Georgia Institute of Technology, Atlanta,
%GA, 30332 USA e-mail: (see http://www.michaelshell.org/contact.html).}% <-this % stops a space
\thanks{F. Tatari, and H. Modares are with Michigan State University, East Lansing, MI 48824 USA (e-mail: tatarifa@msu.edu, modaresh@msu.edu).}% <-this % stops a space
%\thanks{Manuscript received April 19, 2005; revised August 26, 2015.}

%\thanks{This work relates to Department of Navy award N00014-22-1-2159 issued by the Office of Naval Research.}
}

%\markboth{Journal of \LaTeX\ Class Files,~Vol.~14, No.~8, August~2015}%
%{Shell \MakeLowercase{\textit{et al.}}: Bare Demo of IEEEtran.cls for IEEE Journals}

\maketitle

\begin{abstract}
This paper studies deterministic and stochastic fixed-time stability of autonomous  nonlinear discrete-time (DT) systems. Lyapunov conditions are first presented under which the fixed-time stability of deterministic DT system is certified. Extensions to systems under deterministic perturbations as well as stochastic noise  are then considered. For the former, the sensitivity to perturbations for fixed-time stable DT systems is analyzed, and it is shown that fixed-time attractiveness is resulted from the presented Lyapunov conditions. For the latter, sufficient Lyapunov conditions for fixed-time stability in probability of nonlinear stochastic DT systems are presented. The fixed upper bound of the settling-time function is derived for both fixed-time stable and fixed-time attractive systems, and the stochastic settling-time function fixed upper bound is derived for stochastic DT systems. Illustrative examples are given along with simulation results to verify the introduced results. 
\end{abstract}

\begin{IEEEkeywords}
Discrete-time systems, Fixed-time stability, Nonlinear systems, Stochastic systems.
\end{IEEEkeywords}

\IEEEpeerreviewmaketitle

\section{Introduction}

\setlength\parindent{7pt} The Lyapunov stability theory has a longstanding history as a powerful tool in control theory to obtain many important results in the design of a variety of controllers and adaptation laws. The basic framework of the Lyapunov stability theory provides conditions under which their satisfaction guarantees the stability of the system in some sense. While finding a function satisfying these conditions, called Lyapunov function, is generally challenging, controllers and update laws can be developed to make a candidate Lyapunov function enforce the stability conditions. 
	
The Lyapunov theory generally provides conditions to assure the states of a system convergence to an equilibrium state. The qualitative guarantees that are provided for the convergence time determine the stability type, ranging from asymptotic stability, exponential stability, finite-time stability to fixed-time stability. While asymptotic stability and exponential stability provide assurance that the system’s states eventually converge to an equilibrium, many real-world practical systems demand intense time response constraints, which makes these types of stabilities insufficient. Therefore, a surge of interest has emerged in the control community in studying finite-time stability to design control systems and adaptation laws that exhibit finite-time convergence to an equilibrium point. 
	
	Finite-time stability \cite{bhat2000finite} has been studied for continuous-time (CT) and discrete-time (DT) deterministic and stochastic systems \cite{chen2010finite,yin2011finite,rajpurohit2017stochastic,haddad2020finite}.
	Moreover, finite-time stability concept has been extensively applied for the finite-time control of DT \cite{li2013discrete,sun2017discrete,zhao2014neural} and CT \cite{liu2018neural,liu2019direct,li2019adaptive} systems, as well as finite-time identification \cite{adetola2008finite,zhao2019performance,yang2018adaptive,wang2019robust,wang2019finite,vahidi2020memory,lehrer2010parameter,tatari2021finite,tatari2021nonlinear,tatari2022finite1}. In the finite-time stability, however, the settling (i.e., convergence) time, depends on the system's initial condition, and, thus, cannot be specified a priori. Moreover, when the magnitude  of the initial condition is large, it can lead to an unacceptable convergence time guarantee. Fixed-time stability, on the other hand, imposes a stronger requirement on the settling time and provides convergence guarantees with a pre-specified bound on the settling time, independent of the initial condition. Fixed-time stability of deterministic and stochastic CT systems, respectively, studied in \cite{polyakov2011nonlinear} and \cite{yu2019fixed}, have been widely studied within the frameworks of fixed-time control design \cite{zhang2021observer,liu2020fixed,liu2020team,garg2020prescribed,dimanidis2020output, min2022fixed,ren2020fixed,zhao2021stochastic}, fixed-time observer design \cite{basin2017hypersonic,gao2020general,zhang2019singular,zhang2020stabilization,ni2017fixed,yu2017design} and fixed-time identification \cite{noack2016fixed,zhu2020online,wang2020fixed,efimov2020fixed,rios2017time,tatari2021fixed}.
	
	While most real-world systems are CT in nature, DT systems are of great importance since systems are typically discretized and controlled with digital computers  and  micro-controllers in real-world applications. Even though finite-time stability of DT deterministic \cite{tatari2021finite,tatari2021nonlinear,hamrah2019discrete,haddad2020finite} and stochastic \cite{haddad2021lyapunov,lee2021finite} systems are recently studied, fixed-time stability of DT deterministic and stochastic systems is surprisingly unsettled, despite its practical importance.  This gap motivates us to present fixed-time Lyapunov stability conditions that pave the way for the realization of fixed-time control and identification of DT systems through designing appropriate controllers and adaptation laws, respectively. 
	
Lyapunov theory can also be leveraged to study the behavior of uncertain systems. There are typically two types of uncertainties in control systems: randomness which is caused by a noise in a stochastic system, and  deterministic unknown perturbations with known bounds (here, we call the deterministic systems affected by deterministic perturbations as perturbed deterministic systems). The stability results are typically presented in terms of stability in probability for stochastic systems' stability \cite{kushner1967stochastic,yu2019fixed,yin2011finite,chen2010finite,lee2021finite}, which guarantees convergence in probability to an equilibrium  point, and in terms of attractiveness to a bounded set for perturbed systems. 
	
	In this paper, we develop fixed-time stability conditions for both deterministic and stochastic DT nonlinear systems. First, fixed-time stability for equilibria of deterministic DT autonomous systems is defined. That is, a settling-time function is defined with a fixed upper bound independent of the initial condition. We then present Lyapunov theorems for fixed-time stability of both unperturbed and perturbed deterministic DT systems. Moreover, the sensitivity of fixed-time stability properties to perturbations of systems is investigated under the assumption of the existence of a locally Lipschitz discrete Lyapunov function. %Both persistent and vanishing perturbations are considered. 
	It is ensured that fixed-time stability is preserved under perturbations in the form of fixed-time attractiveness. Furthermore, sufficient Lyapunov conditions for fixed-time stability in probability of stochastic DT systems and their stochastic settling-time function are presented. 
	The presented framework will pave the way for designing control laws with guaranteed satisfaction of a given performance measure in fixed time. Moreover, the presented stability results can be leveraged to develop fixed-time observers and identifiers for deterministic and stochastic DT systems, 
	%which demand fast and accurate results for practical applications.
	which are of great importance in control of safety-critical systems that highly rely on a system model and a state estimator to make less-conservative and feasible decisions. This is because fixed-time stability allows the system to preview and quantify probable errors in state estimators and identifiers considerably fast, which can be employed by the control system to avoid conservatism.

	This paper is organized as follows. Section 2 describes
the fixed-time stability of deterministic DT systems. The sensitivity to deterministic perturbation for fixed-time stable DT systems is analyzed in Section 3. Section 4 explains the fixed-time stability in probability of stochastic DT systems. Section 5 represents the verification of the introduced method through illustrative examples along with simulation results. \vspace{6pt}

	\textbf{Notations:} In this paper, the following notations are employed. $\mathbb{R}$, $\mathbb{R}^+$, $\mathbb{Z}$,  $\mathbb{N}^+$, and ${\mathbb{N}}$ represent, respectively, the set of real numbers, non-negative real numbers, integer numbers, natural numbers except zero, and natural numbers. Moreover, $\mathbb{R}^n$ represents the set of $n \times 1$ real column vectors. $\Arrowvert . \Arrowvert$ is used to denote induced 2-norm for matrices and the Euclidean norm for vectors. The trace of a matrix $A$ is indicated with $tr(A)$. $|.|$ denotes the absolute value of any scalar $x$. $\lfloor . \rfloor: \mathbb{R} \mapsto \mathbb{Z} $ is the floor function. $\Delta(.)$ is the DT difference operator for deterministic systems and is defined for a function $V(y(k)):\mathbb{R}^n \mapsto \mathbb{R^+}$ as $\Delta V(y(k+1))= V(y(k+1)) - V(y(k))$.
	
	All random variables are assumed to be defined on a probability space $(\Omega,\mathcal{F},\mathbb{P})$, with $\Omega$ as the sample space, $\mathcal{F}$ as its associated Borel $\sigma$-algebra and $\mathbb{P}$ as the probability measure. For a random variable $\nu: \Omega \longrightarrow \mathbb{R}^n$ defined on the probability space $(\Omega,\mathcal{F},\mathbb{P})$, with some abuse of notation, the statement $\nu \in \mathbb{R}^n$ is used to state the dimension of the random variable. $\mathbb{E}[X]$ denotes the expected value of the random variable $X$ on the probability space $(\Omega,\mathcal{F},\mathbb{P})$. It is assumed that the probability space $(\Omega, \mathcal{F}, \mathbb{P})$ admits a sequence of mutually independent identically distributed random vectors $\nu(k), k \in \mathbb{N}$.

%\label{def:2} 

 \section{Fixed-time Stability for Deterministic Discrete-time Systems} 

	In this section, the fixed-time stability of autonomous unperturbed deterministic DT systems is defined and the Lyapunov theorem specifying the sufficient conditions for their fixed-time stability is presented.
	
	Consider the following nonlinear DT system, 
	\begin{align} 
		y(k+1)= F(y(k)) , \label{syst} 
	\end{align}
	where $F:\mathcal{D}_y \mapsto \mathcal{D}_y, F(0)=0$ is a nonlinear function on $\mathcal{D}_y$, and $\mathcal{D}_y$ is an open set with $0 \in \mathcal{D}_y$. Moreover, $y(k) \in \mathcal{D}_y \subseteq \mathbb{R}^n, \, k \in {{ {\mathbb{N}}}}$ is the system state vector. For an initial condition $y(0)$, define the solution sequence $y(k), \, k \in  {\mathbb{N}}_{y(0)} \subseteq {{ {\mathbb{N}}}}$, where $\mathbb{N}_{y(0)}$ is the maximal interval of existence of $y(k)$ after which the solution may cease outside the domain of $F(.)$. Then, the solution sequence $y(k), \, k \in {\mathbb{N}}_{y(0)} \subseteq {{ {\mathbb{N}}}}$ is uniquely defined in forward time for every initial condition $y(0) \in \mathcal{D}_y$ irrespective of whether or not the function $F(.)$ is a continuous function \cite{haddad2020finite}.

	Before proceeding, the following definitions are needed.
	\vspace{6 pt}
	\begin{definition} {\textit{(Locally Lipschitz function)}}
	 A function $f(x)$ is locally Lipschitz on a domain $\Omega \subset \mathbb{R}^n$ if for each point in $\Omega$ there exist a neighborhood $\Omega_0$ and a positive constant $L$ such that
	 \begin{align}
	 || f(x)-f(y)|| \le L \,\, ||x-y||, \forall x \in \Omega_0, y \in \Omega_0. \label{Lips}
	 \end{align}
	 Moreover, $L$ is called the Lipschitz constant of $f(x)$.
	 \end{definition}

	The following definition extends the fixed-time stability definition presented in \cite{polyakov2011nonlinear} for CT systems to DT systems. 

	\vspace{6 pt}
	\begin{definition} \label{fixedtime} {\textit{(Fixed-time stability)}}
			Consider the DT nonlinear system \eqref{syst}. The zero solution of $y(k) \equiv 0$ to the system \eqref{syst} is said to be fixed-time stable, if there exist an open neighborhood $\mathcal{N}_y \subseteq \mathcal{D}_y$ of the origin and a settling time function $K:\mathcal{N}_y \backslash \{0\} \mapsto \mathbb{N}^+ $, such that:
		\begin{enumerate} 
		\item The system \eqref{syst} is Lyapunov stable. That is, for every $\epsilon > 0$, there exists a $\delta > 0$ such that if  $||y(0)|| \le \delta$, then  $||y(k)|| \le \epsilon$ for all $k \in \{0,...,K(y(0))-1\}$.
		\item For every initial condition $y(0) \in \mathcal{N}_y \backslash \{0\}$, the solution sequence $y(k)$ of \eqref{syst} reaches the equilibrium point and remains there after $k > K(y(0))$ and $\forall y(0) \in \mathcal{N}_y$, where $K:\mathcal{N}_y \backslash \{0\} \mapsto \mathbb{N}^+ $. 
		\item  The settling-time function $K(y(0))$ is bounded, i.e., $\exists K_{max} \in \mathbb{N}^+ : K(y(0)) \le K_{max}, \forall y(0) \in {\mathcal {N}}_y \backslash \{0\}$.
	\end{enumerate}	
	DT nonlinear system \eqref{syst} is globally fixed-time stable if it is fixed-time stable with $\mathcal{N}_y=\mathcal{D}_y=\mathbb{R}^n$.
	\end{definition} \vspace{6pt}
	
	\begin{remark}
		If only conditions 1) and 2) of the above definitions are satisfied, the finite-time stability \cite{bhat2000finite} is resulted. In contrast, the fixed-time stability imposes the additional condition 3). This requirement makes the upper bound of the settling time in the fixed-time stability independent of the initial condition, in contrast to the finite-time stability. Therefore, the fixed-time stability is a stronger type of stability than the finite-time stability.
	\end{remark} \vspace{6pt}
	
	The following theorem provides sufficient conditions under which the system \eqref{syst} is fixed-time stable.  \vspace{6pt}

	\begin{theorem} 
		Consider the nonlinear DT system \eqref{syst}. Suppose there is a Lyapunov function $V:\mathcal{D}_y \mapsto \mathbb{R}^+$ where $\mathcal{D}_y$ is an open neighborhood around the origin and there exist a neighborhood $\Omega_y \subset \mathcal{D}_y$ of the origin such that 
		\begin{align}
			&V(y(0))=0, \label{init}\\&
			V(y(k))>0, \,\,\, y(k) \in \Omega_y \backslash \{0\}, \label{VG}\\&
			\Delta V(y(k+1)) \le -\alpha \min \{ \frac{V(y(k))}{\alpha},\nonumber\\& \max\{ V^{r_1}(y(k)) , V^ {r_2}(y(k))\}\},\,\,\,y(k) \in \Omega_y \backslash \{0\}, \label{Lyap}
		\end{align}
		for some positive constants $0 < \alpha <1$, $0<r_1<1 $, and $r_2>1$.
		Then, the system \eqref{syst} is fixed-time stable and has a settling time function $K: \mathcal{N}_y \mapsto \mathbb{N}^+ $ that satisfies
		\begin{align}
			&K(y(0)) \le \lfloor {\alpha^{\frac{1}{1-r_1}}(1-\alpha^{\frac{1}{1-r_1}})}\rfloor + \lfloor \alpha^{-1}(\alpha^{\frac{1}{1-r_2}} -1)\rfloor +3,\label{FTB1}
		\end{align} 
		for all $y(0) \in \mathcal{N}_y \backslash \{0\}$ where $\mathcal{N}_y$ is an open neighborhood of the origin.
		Moreover, if $\mathcal{D}_y=\mathbb{R}^n$, $V(.)$ is radially unbounded and \eqref{Lyap} holds on $\mathbb{R}^n$, then system \eqref{syst} is globally fixed-time stable.
	\end{theorem}
		\textbf{Proof}
		The Lyapunov stability of the system \eqref{syst} can be concluded using similar arguments as of \cite{haddad2020finite} (see Theorem 4.1). The proof of fixed-time stability consists of three parts. In the first part, we show that for $V(y(0)) > {\alpha}^{\frac{1}{1-r_2}}$, the settling time function is $K(y(0))=1$.
		In the second part, we show that if ${\alpha}^{\frac{1}{1-r_1}}<V(y(0)) < {\alpha}^{\frac{1}{1-r_2}}$, there exists a settling-time function with a fixed upper bound $K^{*}$ (i.e., $K(y(0)) \le K^{*}$) such that one has $ V(y(k)) =0, \,\,\, \forall k > K^{*}$. Finally in the third part, for $V(y(0)) \le {\alpha}^{\frac{1}{1-r_1}}$, the Lyapunov function reaches $V(k)=0$ with settling-time function $K(y(0))=1$. 
	
		Since $0<r_1<1 $ and $r_2>1$, one has
		\begin{align}
			 V^{r_2}(y(k)) <  V^{r_1}(y(k)) ,\,\,\,\,\,\, \forall V(y(k)) \le 1, \label{app1}
		\end{align} 
		and
		\begin{align}
			V^{r_1}(y(k))  \le   V^{r_2}(y(k)) ,\,\,\,\,\,\, \forall V(y(k)) > 1. \label{app2}
		\end{align}  
		We, first, prove part 1 where $V(y(0)) > {\alpha}^{\frac{1}{1-r_2}}$. In this case, since $ {\alpha}^{\frac{1}{1-r_2}}>1$, using \eqref{app2}, \eqref{Lyap} leads to 
		\begin{align}
			&\Delta V(y(k+1)) \le -\alpha \min \{ \frac{V(y(k))}{\alpha},  V^ {r_2}(y(k))\}.
		\end{align}
		Moreover, since $V(y(0)) > {\alpha}^{\frac{1}{1-r_2}}$, the above inequality for $k=0$ yields 
		\begin{align}
			&\Delta V(y(1)) \le -V(y(0)). \label{ini} 
		\end{align}
		Now, \eqref{ini} implies that the settling time function is $K(y(0))=1$, for $V(y(0)) > {\alpha}^{\frac{1}{1-r_2}}$. 
	
		For part 2 where ${\alpha}^{\frac{1}{1-r_1}}<V(y(k)) < {\alpha}^{\frac{1}{1-r_2}}$, based on \eqref{Lyap}, first we show that $V(k)$ reduces to $V(y(k)) \le 1$ after some time where this time is upper bounded by a fixed constant $K_1^*$. 
		
		Note that for $1<V(y(k)) < {\alpha}^{\frac{1}{1-r_2}}$, using \eqref{app2}, one has 
		\begin{align}
			&\min \{ \frac{V(y(k))}{\alpha},\max\{ V^{r_1}(y(k)) , V^ {r_2}(y(k))\}\} =\nonumber \\&
		\min \{ \frac{V(y(k))}{\alpha},  V^ {r_2}(y(k))\}= V^ {r_2}(y(k)), \label{11}
		\end{align}
		Then, \eqref{11} and \eqref{Lyap}, lead to
		\begin{align}
			&V(y(k+1)) \le V(y(k)) -\alpha V^{r_2}(y(k)) . \label{diff1}
		\end{align}
		The condition \eqref{diff1} holds for $k=0,...,{K_1^*-1}$ where $1<V(y(k)) < {\alpha}^{\frac{1}{1-r_2}}$. Therefore, using \eqref{diff1} for $k=0,1,...,K_1^*-1$, one has
		\begin{align}
			V(y(1))-V(y(0)) & \le  -\alpha V^{r_2}(y(0)),\nonumber \\
			V(y(2))-V(y(1)) & \le  -\alpha V^{r_2}(y(1)),\nonumber \\
			&\vdots \nonumber \\
			V(y(K_1^*-1))-V(y(K_1^*-2)) &\le  -\alpha V^{r_2}(y(K_1^*-2)),\nonumber \\
			V(y(K_1^*))-V(y(K_1^*-1)) &\le  -\alpha V^{r_2}(y(K_1^*-1)),\nonumber
		\end{align}
		which leads to 
		\begin{align}
			V(y(K_1^*))-V(y(0)) \le  \sum_{k=0}^{K^*_1-1} { -\alpha V^{r_2}(y(k))}. \label{vk1}
		\end{align}
		Since $V(y(k))<V(y(k-1))$, \eqref{vk1} can be rewritten as 
		\begin{align}
			V(y(K_1^*-1))-V(y(0)) \le -{K_1^*} \alpha V^{r_2}(y(K_1^*-1)), \label{vk11}
		\end{align}
		that leads to
		\begin{align}
			{K_1^*} \le \frac{V(y(0))-V(y(K_1^*-1))}{\alpha V^{r_2}(y(K_1^*-1))}.\label{2k1}
		\end{align}
		Using $1<V(y(0)) < {\alpha}^{\frac{1}{1-r_2}}$ for $k<K_1^*$, \eqref{2k1} implies
		\begin{align}
			{K_1^*} \le \frac{{\alpha}^{\frac{1}{1-r_2}}-1}{\alpha},
		\end{align}
		which leads to the integer upper bound for $K_1^*$ as follows 
		\begin{align}
			{K_1^*} \le \lfloor \alpha^{-1}(\alpha^{\frac{1}{1-r_2}} -1)\rfloor + 1. \label{boundk1}
		\end{align}
		
		Note that since for $k > K_1^*$ one has $V(y(k))\le 1$. Thus, for ${\alpha}^{\frac{1}{1-r_1}}<V(y(k)) \le 1$, using \eqref{app1} one has
		\begin{align}
				&\min \{ \frac{V(y(k))}{\alpha},\max\{ V^{r_1}(y(k)) , V^ {r_2}(y(k))\}\} =\nonumber \\&
		\min \{ \frac{V(y(k))}{\alpha},  V^ {r_1}(y(k))\}= V^ {r_1}(y(k)) , \label{18}
		\end{align}
		and \eqref{18} and \eqref{Lyap} result in
		\begin{align}
			V(y(k+1)) \le V(y(k))-\alpha V^{r_1}(y(k)). \label{diff2}
		\end{align}
		Using \eqref{diff2}, there exists a time  $k>K_2^*$ such that $V(k)$ reaches $V(y(k))\le \alpha^{\frac{1}{1-r_1}}$ where $K_2^*$ is a fixed positive integer. Using \eqref{diff2} for $k=K_1^*,K_1^*+1,...,K_2^*-1$ one obtains
		\begin{align}
			V(y(K_1^*+1))-V(y(K_1^*)) & \le  -\alpha V^{r_1}(y(K_1^*)),\nonumber \\
			V(y(K_1^*+2))-V(y(K_1^*+1)) & \le  -\alpha V^{r_1}(y(K_1^*+1)),\nonumber \\
			&\vdots \nonumber \\
			V(y(K_2^*-1))-V(y(K_2^*-2)) &\le  -\alpha V^{r_1}(y(K_2^*-2)),\nonumber \\
			V(y(K_2^*))-V(y(K_2^*-1)) &\le  -\alpha V^{r_1}(y(K_2^*-1)), \label{k2ineq}
		\end{align}
		
which leads to 
		\begin{align}
			K^*_2-K^*_1 \le \frac{V(K_1^*) - V(K_2^*-1)}{\alpha V^{r_1}(y(K^*_2-1))}. \label{K1K2final}
		\end{align}
Since, $\alpha^{\frac{1}{1-r_1}}<V(y(k))<1$ for $k=K_1^*,K_1^*+1,...,K_2^*-1$, \eqref{K1K2final} reduces to

		\begin{align}
			K^*_2  \le  K^*_1 + \lfloor \alpha^{\frac{1}{1-r_1}} ({1 - \alpha^{\frac{1}{1-r_1}}})\rfloor +1. \label{bound}
		\end{align}
		Using \eqref{boundk1}, \eqref{bound} is rewritten as follows
		\begin{align}
			K^*_2 \le   \lfloor \alpha^{-1}(\alpha^{\frac{1}{1-r_2}} -1)\rfloor + \lfloor \alpha^{\frac{1}{1-r_1}} ({1 - \alpha^{\frac{1}{1-r_1}}})\rfloor +2. \label{bound11}
		\end{align}
		
		At time $k>K_2^*$ for which $V(y(k))\le \alpha^{\frac{1}{1-r_1}}$, \eqref{Lyap} reduces to 
		\begin{align}
			\Delta V(y(k+1)) \le -{V(y(k))}, \label{P3}
		\end{align}
		which leads to $V(y(k+1))=0$ for $k \ge K_2^*+1$. This completes the proof of part 2.
		
		The proof of part 3 where $V(y(0)) \le {\alpha}^{\frac{1}{1-r_1}}$ is also derived based on \eqref{P3} where $V(k)$ reaches zero with $K(y(0))=1$.
		
		Hence, the Lyapunov function reaches $V(y(k))=0$ with the settling-time function $K(y(0))$ such that

\begin{align}
	& K(y(0))=1,\,\,\,\, \nonumber \\&V(y(0))>{\alpha}^{\frac{1}{1-r_2}}\,\,\, and\,\,\, V(y(0))<{\alpha}^{\frac{1}{1-r_1}},  \label{bdB1}
\end{align}
and
\begin{align}
&K(y(0)) \le \lfloor \alpha^{-1}(\alpha^{\frac{1}{1-r_2}} -1)\rfloor +\lfloor \alpha^{\frac{1}{1-r_1}} ({1 - \alpha^{\frac{1}{1-r_1}}})\rfloor+3,\nonumber \\&\alpha^{\frac{1}{1-r_1}}<V(y(0)) \le {\alpha}^{\frac{1}{1-r_2}}.  \label{bdB}
\end{align}
% \begin{align}
% 	&K^*=\lfloor \alpha^{\frac{r_1 -2}{1-r_1}} ({1 - \alpha^{\frac{1}{1-r_1}}})\rfloor+2,\,\,\,\,\,\,\,\nonumber \\& (\frac{1}{\alpha})^{\frac{1}{1-r_1}}<V(y(0)) \le 1 . \label{bdB}
% \end{align}
 Therefore, the system is fixed-time stable, and the system trajectory converges to the origin with the settling-time function given in \eqref{FTB1}. This completes the proof.

Moreover, if $\mathcal{N}_y=\mathcal{D}_y=\mathbb{R}^n$ and $V(.)$ is radially unbounded, the global fixed-time stability follows using the same procedure.				\hfill {$\square$}

\section{Sensitivity to Deterministic Perturbation for Fixed-time Stable Discrete-time Systems} \label{Sec:2}
	The system \eqref{syst} usually describes a nominal model of the system that works under ideal conditions. Nevertheless, many real-world systems are under uncertainties and disturbances that affect the system's behavior. To account for these uncertainties, a more accurate representation of the system can be given by the following deterministic perturbed model 
	\begin{align}
		y(k+1)= F(y(k)) + g(k,y(k)), \label{pert}
	\end{align}
	where $g$ represents perturbation caused by disturbances, uncertainties, or modeling errors. This section investigates the solution behavior of the deterministic perturbed system \eqref{pert} in a neighborhood of the fixed-time stable equilibrium of the nominal system \eqref{syst}.
	\begin{assumption}
	The perturbation term $g$ is bounded, i.e.,  
		\begin{align}
			\mathop {\sup }\limits_{\mathbb{N}^+ \times \mathcal{D}_y} \|g(k,y(k))\| < \delta_0, \label{assump2}
		\end{align}
		for some $\delta_0 < \infty$.
	\end{assumption} \vspace{6pt}
	
		The following definition extends the fixed-time attractiveness definition presented in \cite{polyakov2011nonlinear} for CT systems to DT systems.

	\begin{definition} {(Fixed-time attractiveness)}
	The perturbed system \eqref{pert} is said to be fixed-time attractive by a bounded set $\mathcal{Y}$ around the equilibrium point, if $\forall y(0) \in {\mathcal {N}}_y$ the solution sequence $y(k)$ of \eqref{pert} reaches $\mathcal{Y}$ in finite time $k>K(y(0))$ and remains there  for all $k> K(y(0))$, where $K:\mathcal{N}_y \backslash \{0\} \mapsto \mathbb{N}^+ $ is the settling-time function and the settling-time function $K(y(0))$ is bounded, i.e., $\exists K_{max} \in \mathbb{N}^+ : K(y(0)) \le K_{max}, \forall y(0) \in {\mathcal {N}}_y$.
		\end{definition}
	
	\vspace{6pt}
	The following lemma is required in the proof of Lyapunov-based fixed-time attractiveness of perturbed deterministic systems. \vspace{6pt}
	
	\begin{lemma}
	Let $V(y(k)):\mathcal{D}_y \mapsto \mathbb{R}^+$ be a fixed-time Lyapunov function for the the nominal (unperturbed) system \eqref{syst}, i.e., $V(y(k))$ satisfies conditions \eqref{init}-\eqref{Lyap} for the system  \eqref{pert} when  $g=0$. Let also $V(y(k))$ be locally Lipschitz continuous on $\mathcal{D}_y$ with Lipschitz constant $L_V$ and Assumption 1 hold. Then, for the perturbed deterministic system \eqref{pert}, $V(k)$ satisfies
		\begin{align}
		\Delta V(y(k+1)) \le & -\alpha \min \{ \frac{V(y(k))}{\alpha},\nonumber\\& \max\{ V^{r_1}(y(k)) , V^ {r_2}(y(k))\}\} \nonumber \\ & + L_V \|g(k,y(k))\|, \label{modlyap}
	\end{align}
	where $\Delta V(y(k+1))$ is computed along the solution of the unperturbed deterministic system.
	\end{lemma}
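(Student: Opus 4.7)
The plan is to decompose the Lyapunov increment along the perturbed trajectory into two pieces: one that reproduces the unperturbed decrease, and one that captures the error introduced by $g$. Specifically, along a solution of \eqref{pert}, I would write
\begin{align*}
V(y(k+1)) - V(y(k)) &= V\bigl(F(y(k)) + g(k,y(k))\bigr) - V(y(k)) \\
&= \underbrace{\bigl[V(F(y(k)) + g(k,y(k))) - V(F(y(k)))\bigr]}_{\text{perturbation error}} \\
&\quad+ \underbrace{\bigl[V(F(y(k))) - V(y(k))\bigr]}_{\text{unperturbed decrease}}.
\end{align*}

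For the first bracket, I would invoke the hypothesis that $V$ is Lipschitz continuous on $\mathcal{D}_y$ with constant $L_V$: applied to the two points $F(y(k))$ and $F(y(k)) + g(k,y(k))$, Definition 1 yields
\[
\bigl|V(F(y(k)) + g(k,y(k))) - V(F(y(k)))\bigr| \le L_V \|g(k,y(k))\|,
\]
so in particular this bracket is upper bounded by $L_V \|g(k,y(k))\|$. Assumption 2 ensures this term is well defined and finite on $\mathcal{D}_y$.

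For the second bracket, I note that $F(y(k))$ is precisely the state that the \emph{unperturbed} system \eqref{syst} would reach from $y(k)$ in one step. Since $V$ is by hypothesis a fixed-time Lyapunov function for the nominal system, the inequality \eqref{Lyap} from Theorem 2 applies and gives
\[
V(F(y(k))) - V(y(k)) \le -\max\bigl\{\alpha V^{r_1}(y(k)),\ \beta V^{r_2}(y(k))\bigr\}.
\]

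Adding the two bounds yields \eqref{modlyap}. I do not expect any real obstacle here: the proof is essentially a one-line add-and-subtract trick combined with the Lipschitz estimate. The only point requiring mild care is to interpret $\Delta V(y(k+1))$ on the left as the difference along the \emph{perturbed} trajectory, while the $\max$ term on the right inherits the bound from the nominal dynamics through the add-and-subtract step; this is exactly the role played by the clarification at the end of the lemma statement.
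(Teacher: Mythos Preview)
Your proposal is correct and is precisely the standard add-and-subtract argument one would use here. The paper itself does not actually give a proof of this lemma: it simply states that the proof is similar to reference~[38] (developed there for exponential stability) and omits it. Your decomposition into the unperturbed decrement $V(F(y(k)))-V(y(k))$ plus the Lipschitz-controlled perturbation term $V(F(y(k))+g)-V(F(y(k)))$ is exactly the argument that [38] contains, so there is nothing to compare --- you have supplied the omitted details faithfully. Your closing remark about interpreting $\Delta V(y(k+1))$ along the perturbed trajectory (despite the slightly misleading clause at the end of the lemma statement) is also the correct reading, since otherwise the $L_V\|g\|$ term would be superfluous.
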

	\textbf{Proof}
	The proof is similar to \cite{scokaert1997discrete}, which is developed for exponential stability, and is thus omitted.  \hfill {$\square$}   
 \vspace{6pt}

	The following theorem provides the behavior of deterministic fixed-time stable DT systems under bounded deterministic perturbations.  
	
	\vspace{6pt}
	\begin{theorem} 
		Suppose there exists a Lyapunov function $V:\Omega_y \mapsto \mathbb{R}^+$ which is locally Lipschitz on an open neighborhood $\Omega_y$ of the origin with Lipschitz constant $L_V$ and satisfies \eqref{init}-\eqref{Lyap} for the nominal system \eqref{syst} for some  real positive numbers $\alpha, r_1, r_2>0$ such that $0<\alpha<1 $, $0<r_1<1 $, and $r_2>1$. Let Assumption 1 hold. Then, around the origin, the system \eqref{pert} is fixed-time attractive to the following bound
		\begin{align}
			b_{y}=\{y \in \Omega_y : V(y) \le \mathcal{B} \}, \label{bd}
		\end{align} 
		where
	\begin{align}
	\mathcal{B}=\left \{  \begin{tabular}{ll} $(\frac{m_1 L_V \delta_0}{\alpha})^{\frac{1}{r_2}},\,\,\,\,\,\,\,\,\,\,\,\,\,\,\,\,1<V(y(0))<{\alpha}^{\frac{1}{1-r_2}}$, \\ $(\frac{m_2 L_V \delta_0}{\alpha})^{\frac{1}{r_1}},\,\,\,\,\,\,\,\,\,\,\,\,\,\,\,\,{\alpha}^{\frac{1}{1-r_1}}<V(y(0)) \le 1$, \end{tabular}  \right .  \label{bdB2}
\end{align} 
%for some $m_1>1$ and $m_2>1$,
and its fixed-time bounded settling-time function is $K(y(0)) \le K^*$ where
%		\begin{align}
%			K^*=\left \{  \begin{tabular}{ll} $\lfloor \alpha_c^{-1}(\alpha^{\frac{1}{1-r_2}} -1)\rfloor+1,\,\,\,\,\,\,\,1<V(y(0))<{\alpha}^{\frac{1}{1-r_2}}$, \\ $\lfloor \alpha_d^{-1} (\alpha^{\frac{r_1}{r_1-1}} - \alpha)\rfloor+1,\,\,\,{\alpha}^{\frac{1}{1-r_1}}<V(y(0)) \le 1$, \end{tabular}  \right .  \label{fixt}
%		\end{align}

	\begin{align}
		K^*=\left \{  \begin{tabular}{ll} $\lfloor \alpha_c^{-1}(\alpha^{\frac{1}{1-r_2}} -1)\rfloor+1,\,\,\,\,\,\,\,1<V(y(0))<{\alpha}^{\frac{1}{1-r_2}}$, \\ $\lfloor \alpha_d^{-1} (\alpha^{\frac{r_1}{r_1-1}} - \alpha)\rfloor+1,\,\,\,{\alpha}^{\frac{1}{1-r_1}}<V(y(0)) \le 1$, \end{tabular}  \right .  \label{fixt}
	\end{align}

		$\alpha_c=(1-\frac{1}{m_1})\alpha$, $\alpha_d=(1-\frac{1}{m_2})\alpha$. The constants $m_1>1$ and $m_2>1$ are selected such that 
%		\begin{align}
%			\left \{  \begin{tabular}{ll} $\alpha \mathcal{B}^{r_2} - m_1 L_V \delta_0 > 0,\,\,\,\,\,\,\,\,\,\,\,\,\,\,\,\,1<V(y(0))<{\alpha}^{\frac{1}{1-r_2}}$, \\ $\alpha \mathcal{B}^{r_1} - m_2 L_V \delta_0 > 0,\,\,\,\,\,\,\,\,\,\,\,\,\,\,\,\,{\alpha}^{\frac{1}{1-r_1}}<V(y(0)) \le 1$
%			. \end{tabular}  \right  \label{m1m2}
%		\end{align}

		\begin{align}
		\left \{  \begin{tabular}{ll} $\alpha \mathcal{B}^{r_2} - m_1 L_V \delta_0 > 0,\,\,\,\,\,\,\,\,\,\,\,\,\,\,\,\,V(0)>1$, \\ $\alpha \mathcal{B}^{r_1} - m_2 L_V \delta_0 > 0,\,\,\,\,\,\,\,\,\,\,\,\,\,\,\,\,V(0) \le 1$
		\end{tabular}  \right.  \label{m1m2}
	\end{align}

	\end{theorem}
\vspace{6 pt}	
	
\textbf{Proof} 
		According to Theorem 1, the origin is the fixed-time stable equilibrium for the unperturbed or nominal system \eqref{syst}. 
		%Let $b_y = \{ y \in \Omega,: V(x)<\mathcal{B}\}$  	
		
		Lemma 1 and \eqref{assump2} imply that 
		\begin{align}
			\Delta V(y&(k+1)) \le -\alpha \min \{ \frac{V(y(k))}{\alpha},\nonumber\\& \max\{ V^{r_1}(y(k)) , V^ {r_2}(y(k))\}\} + L_V \delta_0. \label{pertV}
		\end{align}

		For $1<V(y(0))<{\alpha}^{\frac{1}{1-r_2}}$, \eqref{pertV} leads to
		\begin{align}
			\Delta V(y(k+1)) \le -\alpha V^{r_2}(y(k)) + L_V \delta_0. \label{V2pert}
		\end{align}
		Having $1<V(y(0))<{\alpha}^{\frac{1}{1-r_2}}$ and $V(y(0))>\mathcal{B}$, and using  \eqref{m1m2} and $m_1>1$, one has
		\begin{align}
			\alpha \mathcal{B}^ {r_2} - m_1L_V \delta_0 >0 &\Rightarrow  -\alpha \mathcal{B}^ {r_2}  + m_1 L_V \delta_0 < 0, \nonumber \\ &\Rightarrow -\alpha \mathcal{B}^ {r_2}  +  L_V \delta_0 < 0, 
		\end{align}
		which results in
		\begin{align}
			 L_V \delta_0< \frac{1}{m_1}\alpha \mathcal{B}^ {r_2} . \label{delb}
		\end{align}
		For $y(0) \notin b_y$ ($V(y(0))>\mathcal{B}$) and $1<V(y(0))<{\alpha}^{\frac{1}{1-r_2}}$, \eqref{V2pert} and \eqref{delb} imply that   
		\begin{align}
			\Delta V(y(k+1)) \le  -\alpha V^{r_2}(k)+ \frac{1}{m_1}\alpha \mathcal{B}^{r_2}.\label{lab} 
		\end{align}
		Using $V(y(k))> \mathcal{B}$, \eqref{lab} is upper bounded as follows 
	%	\begin{align}
	%		\Delta V(k+1) \le -\alpha V^{r_2}(k) + \frac{1}{m_1}\alpha {V^{r_2}(k)} = (-1+\frac{1}{m_1})\alpha {V^{r_2}(k)}.
	%	\end{align}
		\begin{align}
			\Delta V(y(k+1)) \le - \alpha_c {V^{r_2}(y(k))},\label{V1T2}
		\end{align} such that $\alpha_c=(1-\frac{1}{m_1})\alpha$ is positive.
		Using the results of part 2 in Theorem 1 proof, \eqref{V1T2} implies that for $y(0) \notin b_y$ and $1<V(y(0))<{\alpha}^{\frac{1}{1-r_2}}$ with $\alpha<m_1 L_V \delta_0$, $y(k)$ reaches the invariant set \eqref{bd} within the fixed time steps $K^*=\lfloor \alpha_c^{-1}(\alpha^{\frac{1}{1-r_2}} -1)\rfloor+1$ and remains there after. 
		
		Using \eqref{pertV}, for ${\alpha}^{\frac{1}{1-r_1}}<V(y(0)) \le 1$, one has 
		\begin{align}
			\Delta V(y(k+1)) \le -\alpha V^{r_1}(y(k)) + L_V \delta_0.\label{delV2} 
		\end{align}

		Having ${\alpha}^{\frac{1}{1-r_1}}<V(y(0)) \le 1$ and $V(y(0))>\mathcal{B}$, and using \eqref{m1m2} and $m_2>1$, one has
		\begin{align}
			\alpha \mathcal{B}^ {r_1} - m_2L_V \delta_0 >0 & \Rightarrow   -\alpha \mathcal{B}^ {r_1}  + m_2 L_V \delta_0 < 0, \nonumber \\ & \Rightarrow -\alpha \mathcal{B}^ {r_1}  +  L_V \delta_0 < 0 . \label{delBB}
		\end{align}
		From \eqref{delBB}, one obtains 
		\begin{align}
			  L_V \delta_0< \frac{1}{m_2}\alpha \mathcal{B}^ {r_1} . \label{delB2}
		\end{align}
		For $y(0) \notin b_y$ ($V(y(0))>\mathcal{B}$) and ${\alpha}^{\frac{1}{1-r_1}}<V(y(0)) \le 1$, then \eqref{delV2} and \eqref{delB2} imply that   
		\begin{align}
			\Delta V(y(k+1)) \le  -\alpha V^{r_1}(y(k))+ \frac{1}{m_2}\alpha \mathcal{B}^{r_1}. \label{finalV}
		\end{align}
		Using $V(y(k))> \mathcal{B}$, \eqref{finalV} is upper bounded as follows
		%\begin{align}
		%	\Delta V(k+1) \le -\alpha V^{r_1}(k) + \frac{1}{m_2}\alpha {V^{r_1}(k)} = (-1+\frac{1}{m_2})\alpha {V^{r_1}(k)}.
	%	\end{align}
		\begin{align}
			\Delta V(y(k+1)) \le - \alpha_d {V^{r_1}(y(k))},\label{finalV2}
		\end{align} such that $\alpha_d=(1-\frac{1}{m_2})\alpha$ is positive.
		Using the results of part 2 in Theorem 1 proof, \eqref{finalV2} implies that for $y(0) \notin b_y$ and ${\alpha}^{\frac{1}{1-r_1}}<V(y(0))<1$ with $m_2 L_V \delta_0<\alpha$, $y(k)$ reaches the invariant set \eqref{bd} within the fixed time steps $K^*=\lfloor \alpha_d^{-1} (\alpha^{\frac{r_1}{r_1-1}} - \alpha)\rfloor+1$ and remains in $b_y$ ever after.
		This completes the proof.\hfill {$\square$}

	\begin{remark}
		In \eqref{bd}, the bound $\mathcal{B}$ is either a function of $m_1$ or $m_2$, as given is \eqref{bdB2}. Notice that the fixed-time attractive bound \eqref{bdB2} increases by choosing large values for $m_1$ or $m_2$ and accordingly the fixed-time of convergence given in \eqref{fixt} decreases. Therefore, the bigger we choose the bounded set $\mathcal{B}$, the shorter the fixed-time of convergence and vice-versa.   
	\end{remark}
	
% 		\begin{remark}
% 		Note that for the cases with $V(0) \ge {\alpha}^{\frac{1}{1-r_2}}$ and $V(0) \le {\alpha}^{\frac{1}{1-r_1}}$, the bound of convergence is $\mathcal{B}=L_V \delta_0$ which is reached by $K(y(0))=1$.
% 	\end{remark}

\section{Fixed-time Stability in Probability for Stochastic Discrete-time Systems}

Consider the DT nonlinear stochastic system given by
\begin{align}
\mathpzc{y}(k+1)=&\mathpzc{f}(\mathpzc{y}(k))+\mathpzc{g}(\mathpzc{y}(k)) \nu(k) \triangleq F(\mathpzc{y}(k), \nu(k)),\nonumber\\ &
\mathpzc{y}(0) \stackrel{\text { a.s. }}{=} \mathpzc{y}_{0}, \quad k \in \mathbb{N}, \label{stochsys}
\end{align}
where, for every $k \in \mathbb{N}, \mathpzc{y}(k) \in \mathcal{D} \subseteq \mathbb{R}^{n}$ is a $\mathcal{D}$-valued stochastic process with $\mathpzc{y}_{0} \in \mathcal{D}$, and $\nu(k)\in \mathbb{R}^n , k \in \mathbb{N}$, is the independent and identically distributed zero-mean stochastic process on $(\Omega, \mathcal{F}, \mathbb{P})$. $\mathpzc{f}$ : $\mathcal{D} \rightarrow \mathcal{D}$ and $\mathpzc{g}: \mathcal{D} \rightarrow \mathbb{R}^{n \times n}$ are continuous functions with $\mathpzc{f}(0)=0$ and $\mathpzc{g}(0)=0$ where $\mathpzc{y}_e=0$ is the equilibrium of the system \eqref{stochsys}, if and only if $\mathpzc{y}(.) $ is $\mathbb{P}$-almost surely (a.s.) equal to zero (i.e., $\mathpzc{y}(.) \stackrel{\text { a.s. }}{=} 0$) and is a solution of \eqref{stochsys}.

A stochastic process $\mathpzc{y}:[0, k] \times \Omega \rightarrow \mathcal{D}$ is a solution sequence of \eqref{stochsys} on the discrete-time interval $[0, \kappa]$ with initial condition $\mathpzc{y}(0) \stackrel{\text { a.s. }}{=} \mathpzc{y}_{0}$ if $\mathpzc{y}(k)$ satisfies \eqref{stochsys} almost surely. 

% It is easy to see that the solution to \eqref{stochsys} involves sample paths or realizations lying in the sequence or path space formed by a countable product in $\Omega$. Moreover, $\mathpzc{y}(k)$ is $\mathcal{F}_{k}$-measurable for all $k \in \mathbb{Z}_{+}$, and hence, the solution to \eqref{stochsys} is $\mathcal{F}_{k}$-adapted.

% Note that for each fixed $k \in \mathbb{N}$, the random variable $\nu \mapsto \mathpzc{y}(k, \nu)$ assigns a vector $\mathpzc{y}(\nu)$ to every outcome $\nu \in \Omega$ of an experiment, and for each fixed $\nu \in \Omega$, the mapping $k \mapsto \mathpzc{y}(k, \nu)$ is the sample path of the stochastic process $\mathpzc{y}(k), k \in \bar{Z}_{+}$. A pathwise solution sequence $k \mapsto \mathpzc{y}(k)$ of \eqref{stochsys} in $\left(\Omega,\left\{\mathcal{F}_{k}\right\}_{k \in \mathbb{N}}, \mathbb{P}^{\mathpzc{y}_{0}}\right)$ is said to be right maximally defined if $x$ cannot be extended (either uniquely or nonuniquely) forward in time. Note that for $\mathcal{D}=\mathbb{R}^{n}$, we can construct the solution sequence $\mathpzc{y}(k)$ to \eqref{stochsys} iteratively by setting $\not(0) \stackrel{\text { a.s. }}{=}$ $\mathpzc{y}_{0}$ and using $F(\cdot, \cdot)$ to define $\mathpzc{y}(k)$ recursively by \eqref{stochsys}. Thus, in the case where $\mathcal{D}=\mathbb{R}^{n}$, all right maximal pathwise sequence solutions to \eqref{stochsys} in $\left(\Omega,\left\{\mathcal{F}_{k}\right\}_{k \in \mathbb{N}}, \mathbb{P} \mathpzc{y}_{0}\right)$ exist on $\mathbb{N}$, and hence, \eqref{stochsys} is forward complete.

The following definitions are given for stability in probability for the zero solution $\mathpzc{y}(k) \stackrel{\text { a.s. }}{\equiv} 0$ of the DT nonlinear stochastic system \eqref{stochsys}. 
% Recall that an equilibrium point $x_{\mathrm{e}}=0$ of \eqref{stochsys} is a point such that $f(0)=0$ and $D(0)=0$. In this case, $x_{\mathrm{e}}=0$ is an equilibrium point of \eqref{stochsys} if and only if the constant stochastic process $\mathpzc{y}(\cdot) \stackrel{\text { a.s. }}{=} 0$ is a solution of \eqref{stochsys}.

\begin{definition}{\cite{lee2021finite,qin2019lyapunov}}
\begin{itemize}
\item The zero solution $\mathpzc{y}(k) \stackrel{\text { a.s. }}{\equiv} 0$ to \eqref{stochsys} is Lyapunov stable in probability, if for every $\varepsilon>0$ and $\rho \in(0,1)$, there exist $\delta=\delta(\varepsilon, \rho)>0$ such that, for all $||\mathpzc{y}_{0}|| <\delta$,
$$
\mathbb{P}\left(\sup _{k \in \mathbb{N}}\|\mathpzc{y}(k)\|>\varepsilon\right) \leq \rho .
$$
\item The zero solution $\mathpzc{y}(k) \stackrel{\text { a.s. }}{\equiv} 0$ to \eqref{stochsys} is asymptotically stable in probability if it is Lyapunov stable in probability and, for every $\rho \in(0,1)$, there exists $\delta=\delta(\rho)>$ 0 such that if $||\mathpzc{y}_{0}|| <\delta$, then
$$
\mathbb{P}\left(\lim _{k \rightarrow \infty}\|\mathpzc{y}(k)\|=0\right) \geq 1-\rho .
$$
\item The zero solution $\mathpzc{y}(k) \stackrel{\text { a.s. }}{\equiv} 0$ to \eqref{stochsys} is globally asymptotically stable in probability if it is Lyapunov stable in probability and, for all $\mathpzc{y}_{0} \in \mathbb{R}^{n}$,
$$
\mathbb{P}\left(\lim _{k \rightarrow \infty}\|\mathpzc{y}(k)\|=0\right)=1 .
$$
\item The zero solution $\mathpzc{y}(k) \stackrel{\text { a.s. }}{\equiv} 0$ to \eqref{stochsys} is exponentially stable in probability if for some $0<\gamma<1$ independent of $\nu$, it is Lyapunov stable in probability and, for every $\rho \in(0,1)$, there exists $\delta=\delta(\rho)>$ 0 such that if $||\mathpzc{y}_{0}|| <\delta$, then
$$
\mathbb{P}\left(\lim _{k \rightarrow \infty}\| \gamma^k \mathpzc{y}(k)\|=0\right) \geq 1-\rho .
$$
\item The zero solution $\mathpzc{y}(k) \stackrel{\text { a.s. }}{\equiv} 0$ to \eqref{stochsys} is globally exponentially stable in probability if for some $0<\gamma<1$ independent of $\nu$, it is Lyapunov stable in probability and, for all $\mathpzc{y}_{0} \in \mathbb{R}^{n}$,
$$
\mathbb{P}\left(\lim _{k \rightarrow \infty}\| \gamma^k \mathpzc{y}(k)\|=0\right)=1 .
$$
\end{itemize}
\end{definition}

\begin{definition}{\cite{lee2021finite}}
 For the DT stochastic dynamical system \eqref{stochsys} and $V: \mathcal{D} \rightarrow \mathbb{R}^+$, the difference operator $\boldsymbol{\Delta} V$ of $\mathpzc{y}$ is given as follows,
$$
\boldsymbol{\Delta} V(\mathpzc{y}) = \mathbb{E}[V(F(\mathpzc{y}, \nu))]-V(\mathpzc{y}), \quad \mathpzc{y} \in \mathcal{D} .
$$
\end{definition}
Note that the difference operator in Definition $5$ is a deterministic function and does not involve the expectation of the system state trajectory and only involves the expectation over the random noise variable $\nu$. Moreover, the random vectors $\nu(k), k \in \mathbb{N}$, all have the same distribution. 

In the following, sufficient conditions for Lyapunov, asymptotic and exponential stability in probability for the system \eqref{stochsys} are given.

\begin{lemma} {\cite{haddad2021lyapunov,qin2019lyapunov}}: Consider the discrete-time nonlinear stochastic system \eqref{stochsys} and assume that there exists a continuous function $V: \mathcal{D} \rightarrow \mathbb{R}^+$ such that
$$
\begin{gathered}
V(0)=0, \\
V(\mathpzc{y})>0, \quad \mathpzc{y} \in \mathcal{D}, \quad \mathpzc{y} \neq 0, \\
\boldsymbol{\Delta} V(\mathpzc{y}) \leq 0, \quad \mathpzc{y} \in \mathcal{D}.
\end{gathered}
$$
Then the zero solution $\mathpzc{y}(k) \stackrel{\text { a.s. }}{\equiv} 0$ to \eqref{stochsys} is Lyapunov stable in probability. Moreover, if
$$
\boldsymbol{\Delta} V(\mathpzc{y})<0, \quad \mathpzc{y} \in \mathcal{D}, \quad \mathpzc{y} \neq 0,
$$
then the zero solution $\mathpzc{y}(k) \stackrel{\text { a.s. }}{\equiv} 0$ to \eqref{stochsys} is asymptotically stable in probability. 
Furthermore, if 
$$
\boldsymbol{\Delta} V(\mathpzc{y})< -\gamma V(\mathpzc{y}) , \quad 0<\gamma<1 , \quad \mathpzc{y} \in \mathcal{D}, \quad \mathpzc{y} \neq 0,
$$
then the zero solution $\mathpzc{y}(k) \stackrel{\text { a.s. }}{\equiv} 0$ to \eqref{stochsys} is exponentially stable in probability.
If $\mathcal{D}=\mathbb{R}^{n}$ and $V(\cdot)$ is radially unbounded, then the zero solution $\mathpzc{y}(k) \stackrel{\text { a.s. }}{\equiv} 0$ to \eqref{stochsys} is globally asymptotically or exponentially stable in probability under the defined Lyapunov conditions.
\end{lemma}

The following definition provides the characteristics of stochastic DT systems under which they are fixed-time stable in probability.

\vspace{6 pt}
	\begin{definition} \label{fixedtime} {\textit{(Fixed-time stability in probability)}}
			Consider the stochastic DT nonlinear system \eqref{stochsys}. The zero solution of $\mathpzc{y}(k) \stackrel{\text { a.s. }}{\equiv} 0$ to the system \eqref{stochsys} is said to be fixed-time stable in probability, if there exist a stochastic process called stochastic settling time function $K(\mathpzc{y}, \cdot)$, such that:
		\begin{enumerate} 
		\item The system \eqref{stochsys} is Lyapunov stable in probability. That is, for every $\epsilon > 0$ and $\rho \in (0,1)$, there exists a $\delta= \delta(\epsilon,\rho) > 0$ such that for all $\mathpzc{y}(0) \stackrel{\text { a.s. }}{=} \mathpzc{y}_{0} \in \mathcal{D} \backslash\{0\}$, if  $||\mathpzc{y(0)}|| \le \delta$, then  
$$
\mathbb{P}\left(\sup _{k \in\left[0, K\left(\mathpzc{y}_{0}, \nu\right)\right)}\left\|\mathpzc{y}(k)\right\|>\varepsilon\right) \leq \rho .
$$		
		\item For every initial condition $\mathpzc{y}(0) \stackrel{\text { a.s. }}{=} \mathpzc{y}_{0} \in \mathcal{D} \backslash\{0\}$, the solution sequence $\mathpzc{y}(k)$ is defined on $\left[0, K\left(\mathpzc{y}_{0}, \nu\right)\right)$, $\nu \in \Omega, \mathpzc{y}(k) \in \mathcal{D} \backslash\{0\}, k \in\left[0, K\left(\mathpzc{y}_{0}, \nu\right)\right), \nu \in \Omega$, and
$$
\mathbb{P}\left(\left\|\mathpzc{y}\left(K\left(\mathpzc{y}_{0}, \nu\right)\right)\right\|=0\right)=1 .
$$
		\item  The stochastic settling-time function $K(\mathpzc{y}, \cdot)$, for all $\mathpzc{y} \in \mathcal{D}$, is finite almost surely and there exist a fixed-time upper bound for the stochastic settling-time $K(\mathpzc{y}, \cdot)$, i.e., $\mathbb{E}[K(\mathpzc{y}_0, \nu)] \le K_{max}$ where $K_{max}$ is a positive integer.
	\end{enumerate}	
The zero solution $\mathpzc{y}(k) \stackrel{\text { a.s. }}{\equiv} 0$ to \eqref{stochsys} is globally fixed-time stable in probability if it is fixed time stable in probability with $\mathcal{D}=\mathbb{R}^{n}$.
	\end{definition} \vspace{6pt}

\begin{lemma}
Consider the nonlinear stochastic DT system \eqref{stochsys} and the scalar system 
\begin{align}
			&V(x(k+1)) = \gamma(V(x(k))), \,\,\,x(k) \in \mathbb{R}^n\label{gama},
		\end{align}
		where
\begin{align}
			&\gamma(V(x(k)))=V(x(k)) -\alpha \min \{ \frac{V(x(k))}{\alpha},\nonumber\\& \max\{ V^{r_1}(x(k)) , V^ {r_2}(x(k))\}\},\label{17}
		\end{align}
		such that $0 < \alpha <1$, $0<r_1<1 $, and $r_2>1$.
If there exists a continuous positive-definite function $V: \mathbb{R}^{n} \rightarrow {\mathbb{R}}^{+}$and the nondecreasing function $\gamma:{\mathbb{R}}^{+} \rightarrow {\mathbb{R}}^{+}$such that
$$
\mathbb{E}\left[V(F(\mathpzc{y}, \nu)] \leq \gamma(V(\mathpzc{y})), \quad y \in \mathbb{R}^{n},\right.
$$
then
$$
V\left(\mathpzc{y}_{0}\right) \leq x_{0}, \quad x_{0} \in {\mathbb{R}}^{+}
$$
implies
$$
\mathbb{E}[V(\mathpzc{y}(k))] \leq x(k), \quad k \in \mathbb{N},
$$
where the sequence $x(k), k \in \mathbb{N}$, satisfies \eqref{gama}.
\end{lemma}

\textbf{Proof}. This Lemma is an extension of finite-time stability conditions {\cite{lee2021finite}}, which is provided for fixed-time stability conditions. The proof is similar and is omitted. 	\hfill {$\square$}

The following theorem represents the sufficient Lyapunov conditions for fixed-time stability in probability for stochastic DT nonlinear systems.  

\begin{theorem}
 Consider the nonlinear stochastic system \eqref{stochsys}. If there exists a continuous and radially unbounded function $V: \mathbb{R}^{n} \rightarrow \mathbb{R}^+$ such that
\begin{align}
V(0) &=0 ,\label{22}\\
V(\mathpzc{y}) &>0, \quad \mathpzc{y} \in \mathbb{R}^{n} \backslash\{0\}, \label{23} \\
\mathbb{E}[V(F(\mathpzc{y}, \nu))] & \leq \gamma(V(\mathpzc{y})), \quad \mathpzc{y} \in \mathbb{R}^{n} \backslash\{0\}, \label{stochFxT}
\end{align}
where $\gamma(.)$ is given in \eqref{17}, then the zero solution $\mathpzc{y}(k) \stackrel{\text { a.s. }}{\equiv} 0$ to \eqref{stochsys} is globally fixed-time stable in probability. Moreover, there exists a stochastic settling-time $K: \mathbb{R}^{n} \rightarrow \mathbb{N}$ such that
\begin{align}
\mathbb{E}\left[K\left(\mathpzc{y}_{0}\right)\right] \leq \hat{K}\left(x_{0}\right) < K_{max},
\end{align}
where $K(\cdot)$ is almost surely finite stochastic settling-time function and $\hat{K}\left(x_{0}\right)$ is the finite settling-time function of \eqref{gama} and $K_{max}$ is the fixed upper bound for $\hat{K}\left(x_{0}\right)$ and $\mathbb{E}\left[K\left(\mathpzc{y}_{0}\right)\right]$.
\end{theorem}

\textbf{Proof} 
Based on \eqref{17} and \eqref{stochFxT}, one has
$$
\begin{aligned}
\mathbb{E}[V(F(\mathpzc{y}, \nu))]-V(\mathpzc{y}) & \leq \gamma(V(\mathpzc{y}))-V(\mathpzc{y}) \\
&<0, \quad \mathpzc{y} \in \mathbb{R}^{n} \backslash\{0\},
\end{aligned}
$$
and hence, it follows from Lemma 2 that the zero solution $\mathpzc{y}(k) \stackrel{\text { a.s. }}{=} 0$ to \eqref{stochsys} is globally asymptotically stable in probability.
Now, consider the nonlinear DT system \eqref{gama} and note that, by Theorem 1, the zero solution $x(k) \equiv$ 0 to \eqref{gama} is globally fixed-time stable and there exists $\hat{K}\left(x_{0}\right)<\lfloor {\alpha^{\frac{1}{1-r_1}}(1-\alpha^{\frac{1}{1-r_1}})}\rfloor + \lfloor \alpha^{-1}(\alpha^{\frac{1}{1-r_2}} -1)\rfloor +3$ such that
$$
x(k)=0, \quad k \geq \hat{K}\left(x_{0}\right), \quad x_{0} \in {\mathbb{R}}^{+} .
$$
Now, let $V\left(\mathpzc{y}_{0}\right)<x_{0}, \mathpzc{y}(0) \stackrel{\text { a.s. }}{=} \mathpzc{y}_{0} \in \mathbb{R}^{n}$, and it follows from Lemma $3$ that
$$
\mathbb{E}[V(\mathpzc{y}(k))]=0, \quad k \geq \hat{K}\left(x_{0}\right) .
$$
Since $V(\mathpzc{y}(k)), k \in \mathbb{N}$, is a nonnegative random variable, it follows that $V(\mathpzc{y}(k)) \stackrel{\text { a.s. }}{=} 0$ for all $k \geq \hat{K}\left(x_{0}\right)$. Then, it follows from \eqref{22} and \eqref{23} that $\mathpzc{y}(k) \stackrel{\text { a.s. }}{=} 0$ for all $k \geq \hat{K}\left(x_{0}\right)$. Therefore, there exists a stochastic settling-time $\mathbb{E}[K\left(\mathpzc{y}_{0}\right)] \leq$ $\hat{K}\left(x_{0}\right)$ such that $\mathpzc{y}(k)=0, k \geq K\left(\mathpzc{y}_{0}\right)$. Finally, since $\mathbb{E}[K\left(\mathpzc{y}_{0}\right)] \leq \hat{K}\left(x_{0}\right)$, it follows that
\begin{align}
\mathbb{E}\left[K\left(\mathpzc{y}_{0}\right)\right] &\leq \hat{K}\left(x_{0}\right)\nonumber\\&<\lfloor {\alpha^{\frac{1}{1-r_1}}(1-\alpha^{\frac{1}{1-r_1}})}\rfloor + \lfloor \alpha^{-1}(\alpha^{\frac{1}{1-r_2}} -1)\rfloor +3, \nonumber
\end{align}
and hence, Definition $6$ is satisfied. 	\hfill {$\square$}

\section{Example Illustration and Simulation} \label{Sec:3}
This sections provides examples to verify the correctness of the presented fixed-time stability results. Example 1 is presented for deterministic systems without uncertainties and perturbations. Example 2 is a counterexample that shows that if the Lyapunov conditions for a deterministic system guarantees its fixed-time stability, by adding noise to the system, the same Lyapunov conditions only guarantee exponential stability in probability, and not fixed-time stability in probability. This example clearly shows that moving from a fixed-time stable deterministic system to a stochastic system with the same dynamics, one might look for new Lyapunov function candidates than the one used for the deterministic system to show its fixed-time stability in probability, if there exists one. \vspace{6pt}

\textbf{Example 1.} \emph{(Fixed-time stable deterministic discrete-time system)}  
	Consider the scalar nonlinear DT system given as follows
	\begin{align}
			y(k+1)=& ay(k)-\alpha'sign(y(k))\times \nonumber \\&\min\{|y(k)|/\alpha',\max\{|y(k)|^{r'_1}, |y(k)|^{r'_2}\}\}, \label{exm}
		\end{align}
where $y(k) \in \mathbb{R}$ , $k \in \mathbb{N}$, $\frac{1}{2} <a \le 1$, $\alpha' \in (0, 1)$, $r'_1 \in (0, 1)$ and $r'_2 > 1$. Now, using Theorem 1, it is shown that the zero solution $y(k)=0$ to \eqref{exm} with $a=1$ is globally fixed-time stable. Consider $V(y(k))=y^2(k)$ and $y_L<y(0)<y_H$ where $y_L=\alpha'^{\frac{1}{1-r_1}}$ and $y_H=\alpha'^{\frac{1}{1-r_2}}$ (Note that if $y(0)>y_H$ or $y(0)<y_L$, then the zero solution $y(k) = 0$ for \eqref{exm} with $a=1$
is fixed-time stable with $K(y(0)) = 1$).

The difference of $V(y(k))=y^2(k)$ is as follows,
\begin{align}
\Delta &V(y(k))=[ay(k)-\alpha'sign(y(k)) \times \nonumber \\ &\min\{|y(k)|/\alpha',\max\{|y(k)|^{r'_1}, |y(k)|^{r'_2}\}\}]^2-y^2(k)\nonumber \\ =& (ay(k))^2\nonumber \\ &-2a\alpha'|y(k)|\min\{|y(k)|/\alpha',\max\{|y(k)|^{r'_1}, |y(k)|^{r'_2}\}\} \nonumber \\ &+(\alpha'\min\{|y(k)|/\alpha',\max\{|y(k)|^{r'_1}, |y(k)|^{r'_2}\}\})^2-y^2(k) \nonumber \\=& (a^2-1)y^2(k)\nonumber \\ &+ \alpha'\min\{|y(k)|/\alpha',\max\{|y(k)|^{r'_1}, |y(k)|^{r'_2}\}\} \times\nonumber \\& (- 2a|y(k)|+\alpha' \min\{|y(k)|/\alpha',\max\{|y(k)|^{r'_1}, |y(k)|^{r'_2}\}\} ). \label{exmDelV}
\end{align}
% One knows that
% \begin{align}
% - 2|y(k)| + \alpha' \min\{|y(k)|/\alpha',\max\{|y(k)|^{r'_1}, &|y(k)|^{r'_2}\}\} \le \nonumber \\& -|y(k)|.  \label{}
% \end{align}

% Therefore, \eqref{exmDelV} is upper bounded as below
% \begin{align}
% &\Delta V(y(k)) \le \nonumber \\ &- \alpha'|y(k)| \min\{|y(k)|/\alpha',\max\{|y(k)|^{r'_1}, |y(k)|^{r'_2}\}\} . \label{exmDelV2}
% \end{align}

Using the fact that 
\begin{align}
|y(k)|>\alpha' \min\{|y(k)|/\alpha',\max\{|y(k)|^{r'_1}, |y(k)|^{r'_2}\}\}, \label{fac}
\end{align}
one has 
\begin{align}
&-2a|y(k)|+\alpha' \min\{|y(k)|/\alpha',\max\{|y(k)|^{r'_1}, |y(k)|^{r'_2}\}\}<\nonumber \\&(1-2a)\alpha' \min\{|y(k)|/\alpha',\max\{|y(k)|^{r'_1}, |y(k)|^{r'_2}\}\}. \label{fac1}
\end{align}
Therefore, using \eqref{fac1}, \eqref{exmDelV} leads to,
\begin{align}
\Delta &V(y(k))\le (a^2-1)y^2(k) \nonumber \\& +(1-2a) {\alpha'}^2 \min\{y^2(k)/\alpha'^2,\max\{y^{2r'_1}(k), y^{2r'_2}(k)\}\}, \label{exmDelV3}
\end{align}
where using $V(y(k))=y^2(k)$ one can rewrite \eqref{exmDelV3} as follows,
\begin{align}
&\Delta V(y(k))\le(a^2-1)V(y(k)) \nonumber \\ &
+(1-2a){\alpha'}^2 \min\{V(k)/\alpha'^2,\max\{V^{r'_1}(k), V^{r'_2}(k)\}\}. \label{exmDelV24}
\end{align}
Sice $\frac{1}{2} < a \le 1$, \eqref{exmDelV24} is rewritten as
\begin{align}
&\Delta V(y(k))\le 
-\beta{\alpha'}^2 \min\{V(k)/\alpha'^2,\max\{V^{r'_1}(k), V^{r'_2}(k)\}\}, \label{exmDelV34}
\end{align}
where $\beta=(2a-1)$ and for $\frac{1}{2} < a \le 1$, $0<\beta \le 1 $.

For $ a = 1 $, \eqref{exmDelV34} leads to
\begin{align}
&\Delta V(y(k))\le 
-{\alpha'}^2 \min\{V(k)/\alpha'^2,\max\{V^{r'_1}(k), V^{r'_2}(k)\}\}. \label{exmDelV35}
\end{align}
which is analogous to \eqref{Lyap} where $\alpha={\alpha'}^2$, $r_1=r'_1$ and $r_2=r'_2$, and all the parameters conditions mentioned in Theorem 1 are satisfied. Therefore, it is shown that system \eqref{exm} with $ a = 1 $ is globally fixed-time stable. Based on \eqref{FTB1}, the fixed upper bound for the settling-time function of system \eqref{exm} with $ a = 1 $ is 
\begin{align}
			K^*= \lfloor {{\alpha'}^{\frac{2}{1-r'_1}} (1-\alpha'^{\frac{2}{1-r_1'}})}\rfloor + \lfloor \alpha'^{-2}({\alpha'}^{\frac{2}{1-r'_2}} -1)\rfloor +3.\label{exmFTB}
\end{align}

The state trajectory of the system \eqref{exm} with $ a = 1 $ is simulated in Fig. 1 for 4 different values of parameters $\alpha'$, $r'_1$ and $r'_2$ to verify the fixed-time convergence of the system \eqref{exm} with $ a = 1 $ where in Cases 1-4, $y(0)=20$ such that $y_L<y(0)<y_H$ in Cases 1-3 and $y(0)>y_H$ for Case 4. As depicted in Fig. 1, the settling-time is less than $K^*$ for Cases 1-3 where $K^*$ is calculated using \eqref{exmFTB} and given in Table 1, and as mentioned in \eqref{bdB1}, for case 4, $K(y(0))=1$. 
\begin{table}
\begin{center}
		\caption{Parameters $\alpha' $, $r'_1$, $r'_2 $, and fixed-time upper bound of settling-time function ($K^*$) for \eqref{exm} with $a=1$ and the initial condition $y(0)=8$.} 
\begin{tabular}{ | c| c | c| c | c | c | c |  } 
\hline
{ }  &  {$\alpha'$}&  {$r_1'$}&  {$r_2'$}&  {$K^*$}&  {$y_L$}&  {$y_H$} \\
\hline
{Case 1}  & {0.4} & {0.2}& {1.2}&{ $59601$ }& {0.31}&{ $97.6$ } \\
\hline
{Case 2}  & {0.7} & {0.9} & {1.1}& {2558} & {0.02}& {35.4}\\
\hline
{Case 3} &  {0.3} & {0.6} & {1.3} & {$34002$}& {0.04} & {$55.3$}  \\ 
\hline
{Case 4} & {0.7} & {0.9} & {10} & {$3$}& {0.02} & {$1.04$}  \\ 
 \hline
\end{tabular}
\end{center}
\end{table}

\begin{figure}[thpb]
	\centering
	\includegraphics[height=2.8 in, width=3.5 in]{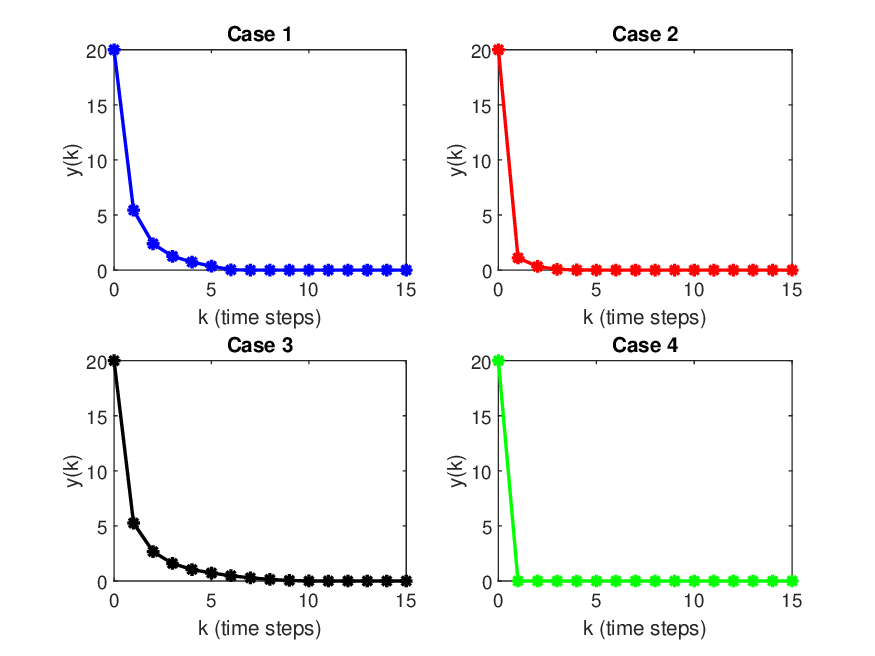}
	\caption{Different fixed times of convergence for system \eqref{exm} with $a=1$ and different values of $\alpha' $, $r'_1$ and $r'_2 $.}
	\label{figurelabel}
\end{figure}

\begin{figure}[thpb]
	\centering
	\includegraphics[height=2.8 in, width=3.5 in]{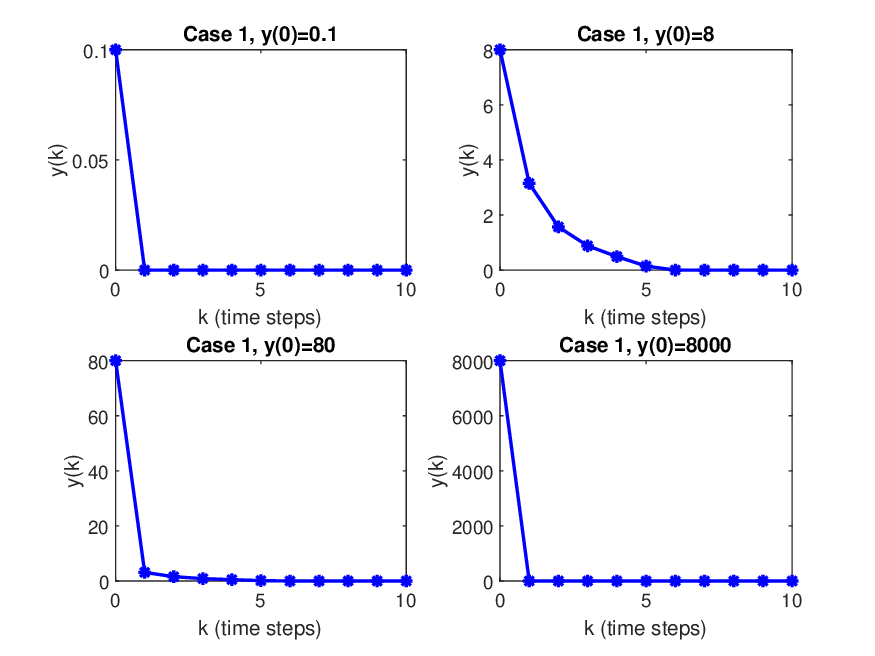}
	\caption{Fixed-time convergence for Case 1 of system \eqref{exm} with $a=1$ for different initial values.}
	\label{figurelabel}
\end{figure}

In Fig. 2, the state trajectory of system \eqref{exm} with $ a = 1 $ and Case 1 parameters ($\alpha'=0.4$, $r_1'=0.2$, $r_2'=1.2$) is simulated for 4 different initial conditions, $y(0)=0.1, (y(0)<y_L)$, $y(0)=8, (y_L<y(0)<y_H)$, $y(0)=80, (y_L<y(0)<y_H)$ and $y(0)=8000, (y_H<y(0))$ where as expected for $y(0)=0.1$ and $y(0)=8000$, the settling-time is $K(y(0))=1$, and for $y(0)=8$ and $y(0)=80$ the convergence to zero is achieved in few steps which ensures $K(y(0)) \le K^*$. 

However, for $\frac{1}{2} < a < 1$, based on \eqref{exmDelV34}, Lemma 2 and a similar procedure to Theorem 1 proof, one can show that the system \eqref{exm} with  $\frac{1}{2} < a < 1$ is exponentially stable.
\vspace{6pt}
% \begin{table}
% \begin{center}
% 		\caption{Parameters $\alpha' $, $r'_1$, $r'_2 $, and fixed-time upper bound of settling-time function ($K^*$) for \eqref{exm} with the initial condition $y(0)=8$.} 
% \begin{tabular}{ | c| c | c| c | c | c | c |  } 
% \hline
% { }  &  {$\alpha'$}&  {$r_1'$}&  {$r_2'$}&  {$K^*$}&  {$y_L$}&  {$y_H$} \\
% \hline
% {Case 1}  & {0.5} & {0.7}& {1.3}&{ $405$ }& {0.09}&{ $10.0$ } \\
% \hline
% {Case 2}  & {0.7} & {0.9} & {1.1}& {2558} & {0.02}& {35.4}\\
% \hline
% {Case 3} &  {0.3} & {0.6} & {1.4} & {$4564$}& {0.04} & {$20.28$}  \\ 
% \hline
% {Case 4} & {0.7} & {0.9} & {10} & {$3$}& {0.02} & {$1.04$}  \\ 
%  \hline
% \end{tabular}
% \end{center}
% \end{table}

\textbf{Example 2.} \emph{(Lyapunov function candidate: from deterministic fixed-time stable systems to their stochastic counterparts)} 

In this counterexample we show that the deterministic global fixed-time stable system may not preserve its fixed-time stability under the same Lyapunov function candidate after it is exposed to stochastic noise.

	Consider the scalar stochastic nonlinear DT system as follows
	\begin{align}
			\mathpzc{y}&(k+1)=  a \mathpzc{y}(k)-\alpha'sign(\mathpzc{y}(k))\times \nonumber \\&\min\{|\mathpzc{y}(k)|/\alpha',\max\{|\mathpzc{y}(k)|^{r'_1}, |\mathpzc{y}(k)|^{r'_2}\}\} + b \mathpzc{y}(k) \nu(k), \label{exm2}
		\end{align}
where $\mathpzc{y}(k) \in \mathbb{R}$ , $k \in \mathbb{N}$, $\alpha' \in (0, 1)$, $r'_1 \in (0, 1)$ and $r'_2 > 1$, $\nu(k) \in \mathbb{R}$ is a zero-mean stochastic noise with $\mathbb{E}[\nu(k)]=0$ and $\mathbb{E}[\nu^2(k)]=\sigma^2$, $\frac{1}{2} <a \le 1$ and $b < \sqrt{\frac{1-a^2}{\sigma^2}}$.

Now, using Theorem 3 and the results of Example 1, it is shown that the zero solution $\mathpzc{y}(k) \stackrel{\text { a.s. }}{=} 0$ to \eqref{exm2} (the stochastic version of \eqref{exm}) does not show global fixed-time stability in probability for $a=1$ but preserves its exponential stability in probability for $\frac{1}{2} <a < 1$, using the same Lyapunov function as in Example 1. 

Consider $V(\mathpzc{y}(k))=\mathpzc{y}^2(k)$ such that for \eqref{exm2}, one has
\begin{align}
\boldsymbol{\Delta} &V(\mathpzc{y}(k))=\nonumber \\ &\mathbb{E}[(a\mathpzc{y}(k)-\alpha'sign(\mathpzc{y}(k))\min\{|\mathpzc{y}(k)|/\alpha',\nonumber \\ &\max\{|\mathpzc{y}(k)|^{r'_1}, |\mathpzc{y}(k)|^{r'_2}\}\}+  b \mathpzc{y}(k) \nu(k))^2]-\mathpzc{y}^2(k)\nonumber \\ =& \mathbb{E}[a^2\mathpzc{y}^2(k)+(\alpha'\min\{|\mathpzc{y}(k)|/\alpha',\max\{|\mathpzc{y}(k)|^{r'_1}, |\mathpzc{y}(k)|^{r'_2}\}\})^2 \nonumber \\&+b^2\mathpzc{y}^2(k)\nu^2(k) +2ab\mathpzc{y}^2(k)\nu(k)
\nonumber \\ &-2a\alpha'|\mathpzc{y}(k)|\min\{|\mathpzc{y}(k)|/\alpha',\max\{|\mathpzc{y}(k)|^{r'_1}, |\mathpzc{y}(k)|^{r'_2}\}\} \nonumber \\ &-2b\alpha'\nu(k)|\mathpzc{y}(k)|\min\{|\mathpzc{y}(k)|/\alpha',\max\{|\mathpzc{y}(k)|^{r'_1}, |\mathpzc{y}(k)|^{r'_2}\}\}]\nonumber \\ &-\mathpzc{y}^2(k) =
(a^2+b^2\sigma^2-1) \mathpzc{y}^2(k)\nonumber \\ &+\alpha'^2\min\{|\mathpzc{y}(k)|^2/{\alpha'^2},\max\{|\mathpzc{y}(k)|^{2r'_1}, |\mathpzc{y}(k)|^{2r'_2}\}\} \nonumber \\ &-2a\alpha'|\mathpzc{y}(k)|\min\{|\mathpzc{y}(k)|/\alpha',\max\{|\mathpzc{y}(k)|^{r'_1}, |\mathpzc{y}(k)|^{r'_2}\}\}
. \label{exmDelV2}
\end{align}

Using \eqref{fac}, \eqref{exmDelV2} leads to,
\begin{align}
&\boldsymbol{\Delta} V(\mathpzc{y}(k))\le (a^2+b^2\sigma^2-1) \mathpzc{y}^2(k) \nonumber \\ &- (2a-1){\alpha'}^2 \min\{\mathpzc{y}^2(k)/\alpha'^2,\max\{\mathpzc{y}^{2r'_1}(k), \mathpzc{y}^{2r'_2}(k)\}\}, \label{exmDelV13}
\end{align}
where using $V(y(k))=y^2(k)$ one can rewrite \eqref{exmDelV13} as follows,
\begin{align}
&\boldsymbol{\Delta}V(\mathpzc{y}(k))\le (a^2+b^2\sigma^2-1) V(k) \nonumber \\ &- \beta{\alpha} \min\{V(k)/\alpha,\max\{V(k)^{r_1}(k), V(k)^{r_2}(k)\}\}, \label{exmDelV133}
\end{align}
where $\beta=2a-1$, $\alpha={\alpha'}^2$, $r_1=r'_1$ and $r_2=r'_2$.

For $a=1$, \eqref{exmDelV133} reduces to 
\begin{align}
&\boldsymbol{\Delta} V(\mathpzc{y}(k))\le b^2\sigma^2 V(k) \nonumber \\ &- {\alpha} \min\{V(k)/\alpha,\max\{V(k)^{r_1}(k), V(k)^{r_2}(k)\}\}. \label{exmDelV333}
\end{align}
However, \eqref{exmDelV333} can not support the global fixed-time stability in probability of the system \eqref{exm2} with $a=1$, due to the injected noise stochasticity, while in Example 1 it was shown that the same system without noise is fixed-time stable. 

For $\frac{1}{2} < a < 1$ and $b < \sqrt{\frac{1-a^2}{\sigma^2}}$, one has $0<\beta<1$ and $a^2+b^2\sigma^2-1<0$. Thus, using \eqref{exmDelV133} one obtains 
\begin{align}
&\boldsymbol{\Delta} V(\mathpzc{y}(k))\le \nonumber \\ &- \beta{\alpha} \min\{V(k)/\alpha,\max\{V(k)^{r_1}(k), V(k)^{r_2}(k)\}\}. \label{exmDelV533}
\end{align}
By using \eqref{exmDelV533}, Lemma 2 and a similar procedure to Theorem 3 proof, one can show that the system \eqref{exm2} with  $\frac{1}{2} < a < 1$ and $b < \sqrt{\frac{1-a^2}{\sigma^2}}$ is exponentially stable in probability. Therefore, the stochastic system \eqref{exm2} preserves exponential stability in probability for $\frac{1}{2} < a < 1$ and $b < \sqrt{\frac{1-a^2}{\sigma^2}}$.

\section{Conclusion}
This paper addressed the fixed-time stability for deterministic and stochastic discrete-time (DT) autonomous systems based on fixed-time Lyapunov stability analysis. Novel Lyapunov conditions are derived under which the fixed-time stability of autonomous DT deterministic and stochastic systems is certified. The sensitivity to perturbations for fixed-time stable DT systems is analyzed and the analysis shows that fixed-time attractiveness can be resulted from the presented Lyapunov conditions. For both cases of fixed-time stable and fixed-time attractive systems, the fixed upper bounds of the settling-time functions are given. For future work, we intend to employ the introduced DT systems fixed-time stability analysis for control and identification of such systems.

\bibliographystyle{IEEEtran}
\bibliography{IEEEabrv,ref.bib}
 
\end{document}